\tikzstyle{uredge}=[draw=red]
\tikzstyle{redge}=[draw=green!50!black]
\tikzstyle{rnode}=[]
\def\miko{Mikol\'a\v{s} Janota}
\def\theTitle{%
      An {Achilles'} Heel of Term-Resolution%
}
\title{\theTitle}
\author{{\miko}\inst{1}}
\institute{IST/INESC-ID, Lisbon, Portugal}
\DeclareMathOperator*{\Neg}{\textrm{Neg}}
\DeclareMathOperator*{\nand}{\textsf{NAND}}
\DeclareMathOperator*{\Pref}{{\mathcal P}}
\DeclareMathOperator*{\var}{{\sf var}}
\DeclareMathOperator*{\lev}{{\sf lv}}
\DeclareMathOperator*{\vars}{{\sf var}}
\DeclareMathOperator*{\union}{\cup}
\DeclareMathOperator*{\intersection}{\cap}
\newcommand{\comprehension}[2]{\ensuremath{\left\{ {#1} \;|\; {#2}\right\}}}
\definecolor{citeblue}{rgb}{0.1,0,.4}
\begin{document}
\maketitle
\begin{abstract}
  Term-resolution provides an elegant mechanism to prove that a
  quantified Boolean formula (QBF) is true. It is a dual
  to Q-resolution (also referred to as clause-resolution) and is
  practically highly important as it enables certifying answers of
  DPLL-based QBF solvers. While term-resolution and Q-resolution are
  very similar, they're not completely symmetric. In particular,
  Q-resolution operates on clauses and term-resolution operates on
  models of the matrix.  This paper investigates what impact this
  asymmetry has. We'll see that there is a large class of formulas
  (formulas with ``big models'') whose term-resolution proofs are
  exponential. As a possible remedy, the paper suggests to prove true
  QBFs by refuting their negation ({\em negate-refute}), rather than
  proving them by term-resolution. The paper shows that from the
  theoretical perspective this is indeed a favorable approach.  In
  particular, negation-refutation can p-simulates term-resolution and
  there is an exponential separation between the two calculi.  These
  observations further our understanding of proof systems for QBFs and
  provide a strong theoretical underpinning for the effort towards
  non-CNF QBF solvers.
\end{abstract}

\section{Introduction}\label{sec:introduction}
Arguably, the interest of computer scientists in proof complexity
begins with the seminal work of Cook and Reckhow who showed a relation
between proof complexity and the question NP vs.\
co-NP~\cite{DBLP:journals/jsyml/CookR79}. This interest was further
fueled by the practical success of programs for automated reasoning,
such as {\em SMT solvers} or {\em SAT solvers}. Machine-verifiable
proofs serve as {\em certificates} for such solvers. It is important
that a solver can produce a certificate of its answer as the solver
itself can contain
bugs~\cite{VG02-cfv,DBLP:conf/sat/BrummayerLB10}. Moreover, proofs
have turned out to be important artifacts for further computations,
like invariant inference for example~\cite{DBLP:conf/cav/McMillan03}.
This paper follows this line of research, i.e.\ proof complexity and
solver complication certification, with the focus on {\em quantified
Boolean formula} (QBF). In particular, it focuses on QBFs whose propositional
part is in {\em conjunctive normal form} (QCNF). QCNF is complete and widely
popular input for QBF solvers due to its susceptibility to simple
representation inside the solver.

A number of QCNF solvers take inspiration in the approach that turned
out to be so successful for SAT; and that is {\em conflict driven
  clause learning}
(CDCL)~\cite{DBLP:conf/ictai/SilvaS96,DBLP:series/faia/SilvaLM09}.
Since {\em propositional resolution} is the underlying proof principle
used in SAT, an analogous proof system was developed for QCNF. In
particular, {\em Q-resolution}~\cite{DBLP:journals/iandc/BuningKF95}
for false formulas, and {\em
  term-resolution}~\cite{GiunchigliaEtAlJAIR06} for true formulas.
It has been shown that CDCL-based QBF
solvers~\cite{DBLP:conf/iccad/ZhangM02} can be certified by these two
proof systems~\cite{GiunchigliaEtAlJAIR06}. Recently, several proof
complexity analyses of Q-resolution were published.  A separation
result for Q-resolution and a sequent calculus by
Kraj{\'\i}{\v{c}}ek~and~Pudl{\'a}k~\cite{KrajicekPudlak90} is shown by
Egly~\cite{DBLP:conf/sat/Egly12}; Van Gelder shows that enabling
resolution on universal-variables in Q-resolution proofs gives an
exponential advantage to Q-resolution~\cite{DBLP:conf/cp/Gelder12};
Janota and Silva show some p-simulation results for fragments of
Q-resolution and solving QBF by expanding universal
variables~\cite{JanotaSilva-SAT13}.

This paper brings the focus to term-resolution.  While term-resolution
is an elegant system because it provides a dual to Q-resolution, the
two types of resolution are not perfectly symmetric. This is because
Q-resolution can operate on the given clauses but term-resolution
operates on the satisfying assignments of those clauses. This paper
shows that this difference exposes an Achilles' heel of
term-resolution.

The first result of this paper is that it shows that term-resolution
proofs are large for QCNFs whose propositional part have models with a
large number of universal literals. More precisely, if each model has
at least $k$ universal literals, any term-resolution proof has at
least $2^k$ nodes. Subsequently, the paper investigates an alternative
route to term-resolution and that is {\em refuting the negation} of
the formula. The paper shows that any term-resolution proof can be
translated to a negation-refutation in polynomial time. On a
particular formula $\Psi$ we show an exponential separation between
negation-refutation and term-resolution, i.e.\ all term-resolution
proofs of $\Psi$ are exponential but there is a Q-resolution proof of
$\lnot\Psi$ is polynomial.

These results have direct practical implications for QBF solving
because term-resolution enables certifying DPLL-based QBF
solvers. Consequently, a formula whose term-resolution proofs are
exponential, will require exponential time to {\em solve}.  These
theoretical results further substantiate an observation already made
in the QBF community and that is that QBF with propositional part in
CNF are particularly harmful for solving
solving~\cite{DBLP:conf/aaai/AnsoteguiGS05,DBLP:conf/aaai/Zhang06}.
\section {Preliminaries}\label{section:introduction}
A {\em literal} is a Boolean variable or its negation.  For a literal
$ l $, we write $\bar l $ to denote the literal {\em complementary} to
$ l $, i.e.\ $\bar x=\lnot x$, $\overline{\lnot x}=x$.  A {\em clause}
is a disjunction of finitely many literals.  A formula in {\em
  conjunctive normal form} (CNF) is a conjunction finitely many
clauses. As common, whenever convenient, a clause is treated as a set
of literals and a CNF formula as a set of sets of literals.

For a literal $l=x$ or $l=\bar x$, we write $\vars(l)$ for $x$.  
Analogously, for
a clause $C$,  $\vars(C)$  denotes
$\comprehension{\vars(l)}{l\in C}$ and for a CNF $\psi$, $\vars(C)$
denotes $\comprehension{l}{l\in\vars(C), C\in\psi}$

For a set of variables $ X $ an {\em assignment} is a function from $
X $ to the constants $ 0 $ and $ 1 $. We say that the assignment is
{\em complete} for $ X $ if the function is total.

Analogously to a clause, a {\em term} is a conjunction of finitely many
literals. Again, whenever convenient, a term is treated as a set of
literals. We say that a term $ T $ is a {\em model} of a CNF $\phi $
if and only if for each clause $C\in\phi$ there is a literal $ l $
both in $T$ and $C$, i.e.\ $C\intersection T\neq\emptyset$.

There is an obvious relation between terms and assignments.  A term
uniquely determines a set of assignments that satisfy the term. If
an assignment satisfies a model of $\phi $, then it is a satisfying
assignment.  Note that some definitions require a model to be a
complete assignment to the variables of $\phi$. The aforementioned
correspondence shows that there's no substantial difference between
the definitions.

{\em Quantified Boolean Formulas} (QBFs)~\cite{DBLP:series/faia/BuningB09} are an extension of propositional
logic with quantifiers with the standard semantics that 
$\forall x.\,\Psi$ is satisfied by the same truth assignments as  $\Psi[x/0]\land\Psi[x/1]$
and
$\exists x.\,\Psi$ as  $\Psi[x/0]\lor\Psi[x/1]$.
 Unless specified otherwise,   QBFs are in 
{\em  closed} {\em  prenex}  form with a CNF {\em matrix}, i.e.\
\hbox{${\cal Q}_1 X_1 \dots {\cal Q}_k X_k.\ \phi$},
where 
$X_i$ are pairwise disjoint sets of variables; 
${\cal Q}_i\in\{\exists,\forall\}$ and ${\cal Q}_i\neq {\cal Q}_{i+1}$.
The formula $\phi$  is in CNF and is  defined only on variables  $X_1\union\dots\union X_k$.
 The propositional part $\phi $ is called the {\em matrix} and the rest the {\em prefix}.
We write QCNF to talk about formulas in this form.
If a variable $x$ is in the set $X_i$, we say that $x$ is at {\em
  level} $i$ and write $\lev(x)=i$; we write $\lev(l)$ for
$\lev(\var(l))$.  A closed QBF is {\em false} (resp.\ {\em true}), iff
it is semantically equivalent to the constant $0$ (resp.\ $1$).

 If a variable is universally quantified, we say that the variable is
 universal.  For a literal $l$ and a universal variable $ x $ such
 that $\vars(l)=x$, we say that $ l $ is universal. The notions of
 existential variable and term are defined analogously.

\subsection{Q-resolution}\label{sec:Qresolution}
Q-resolution~\cite{DBLP:journals/iandc/BuningKF95} is an extension of
propositional resolution for showing that a QCNF is false.
For a clause $C$, a universal literal $l\in C$ is {\em blocked} by an
existential literal $k\in C$ iff $\lev(l)<\lev(k)$.  {\em
  $\forall$-reduction} is the operation of removing from a clause $C$
all universal literals that are {\em not} blocked by some literal.
For two $\forall$-reduced clauses $x\lor C_1$ and $\bar x\lor C_2$, where $x$ is an
existential variable, a {\em Q-resolvent}~\cite{DBLP:journals/iandc/BuningKF95} is
obtained in two steps.  (1)~Compute  $C_u = C_1\union C_2\smallsetminus\{x,\bar x\}$.
If $C_u$ contains complementary literals, the Q-resolvent is undefined.
(2)~$\forall$-reduce $C_u$.
For a QCNF $\Pref. \phi$, a A {\em Q-resolution proof} of a clause $C$
is a finite sequence of clauses $C_1,\dots,C_n$ where $C_n=C$ and any $C_i$
in the sequence is part of the given matrix $\phi$ or it is a
Q-resolvent for some pair of the preceding clauses.  A Q-resolution
proof is called a {\em refutation} iff $C$ is the empty clause, denoted
$\bot$.
\vspace{2pt}

\noindent
\begin{minipage}{.78\textwidth}
\quad In this paper Q-resolution proofs treated as connected directed
acyclic graphs so that the each clause in the proof corresponds to
some node labeled with that clause.  We assume that
the input clauses are already $\forall$-reduced.
 Q-resolution steps are depicted as on the right.
Note that the $\forall$-reduction step is depicted separately.
\end{minipage}
\tikzstyle{level 1}=[level distance=15pt, sibling distance=1.5cm]
\tikzstyle{level 2}=[level distance=15pt, sibling distance=1.13cm]
\begin{tikzpicture}[baseline=15pt,grow=up]
\node[rnode,below] (r) at (0,0) {$C$} ;
\node[rnode] (ru) at (0,15pt)  { $C_u$ } ;
\node[rnode,above] (p1) at (-.66,30pt) {$C_1\lor x$} ;
\node[rnode,above] (p2) at (0.66,30pt) {$C_2\lor\bar x$} ;
\draw[redge] (p1) -- (ru) -- (p2) ;
\draw[uredge] (r) -- (ru) ;
\end{tikzpicture}

\subsection{Term-Resolution}
\label{sec:tresd}

{\em Term-resolution} is analogous to Q-resolution with the difference
that it operates on terms and its purpose is to prove that a QCNF is
true~\cite{GiunchigliaEtAlJAIR06}. Since the calculus operates on
QBF's with CNF matrices, it needs a mechanism to generate terms to
operate on. This is done by a rule that enables using models of the
given matrix in the proof.

Term-resolution, resolves on universal literals and reduces
existential ones.  For a term $ T $ an existential literal $l$ is {\em
  blocked}, iff there is a universal $k\in T$ such that
$\lev(l)<lev(k)$.  $\exists$-reduction removes from a term $T$ all
existential literals that are {\em not} blocked by some universal
literal. For two $\exists $-reduced terms $x\land T_1$ and $\bar x\land
T_2$, a {\em term-resolvent} is defined as the $\exists$-reduction of
the term $T_1\land T_2$, if $T_1$ and $T_2$ do not contain
complementary literals; it is undefined otherwise.

For a QCNF $P.\ \phi$ a {\em term-resolution proof} of the term $T_m$
a is a finite sequence $T_1,\dots,T_m$ of terms such that each term
$T_i$ is a model of $\phi $ or it was obtained from the previous terms
by $\exists $-reduction or term-resolution. Such proof {\em proves}
$\Pref.\ \phi$ iff $T_m$ is the empty term, denoted as $\top $.  Those
terms of the proof that are models of $\phi $ are said to be generated
by a {\em model generation rule}.  (Terms are sometimes referred to as
``cubes'', especially in the context of DPLL QBF solvers that apply
cube learning.)
%

\subsection{Proof complexity}
\label{sec:proof:complexity}

A proof system $P$ is relation $P(\Phi,\pi)$ computable in polynomial
time such that a formula $\Phi$ is true iff there exists a proof $\pi$
for which $P(\Phi,\pi)$.
A proof system $P_1$ {\em p-simulates} a proof system $P_2$ iff any
proof in $P_2$ of a formula $\Phi$ can be translated into a proof in
$P_1$ of $\Phi$ in polynomial time
(cf.~\cite{DBLP:journals/jsyml/CookR79,DBLP:journals/eatcs/Urquhart98}).

As is common, we will count the sizes of Q-resolution and
term-resolution as the number of resolution steps and number of
$\forall$/$\exists$-reductions where each reduced literal is counted
separately.

\section {The Achilles' Heel}\label{section:heel}

This section describes a large class of formulas that have exponential
term-resolution proofs. Recall that a leaf of a term-resolution proof
must be generated by the model-generation rule. We exploit this by
forcing the proof to generate many leafs.

First we make a simple observation that for any assignment to
universal variables, there must be a leaf-term in a term-resolution
proof that ``agrees'' with that assignment.  We say that a term
$T$ {\em agrees} with an assignment $\tau$ iff there is no literal $l$
such that $\bar l\in T$ and $\tau(l)=1$.

\begin{lemma}\label{lemma:accordance}
  For any assignment $\tau$ to universal variables and a
  term-resolution proof $\pi$ there is a leaf-term $T$ of $\pi$ that
  agrees with $\tau $.
\end{lemma}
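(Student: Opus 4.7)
The plan is to prove the lemma by structural induction on $\pi$, after first strengthening the statement to the following: for every node $v$ of $\pi$ whose associated term $T(v)$ agrees with $\tau$, the sub-proof rooted at $v$ already contains a leaf-term that agrees with $\tau$. The original lemma then follows by applying the strengthened claim to the root of $\pi$, whose term is the empty term $\top$ and therefore vacuously agrees with any $\tau$.

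The base case, where $v$ is a leaf, is immediate: $v$ itself is the required leaf. For the inductive step I would treat the two proof rules separately. If $T(v)$ is produced from a parent $T(v')$ by $\exists$-reduction, then $T(v)$ and $T(v')$ carry exactly the same universal literals (only existential literals are removed), and since $\tau$ assigns only universal variables $T(v')$ still agrees with $\tau$; the induction hypothesis then applies. If $T(v)$ is produced by term-resolution on a universal variable $x$ from parents $T_1 = x \land T_1'$ and $T_2 = \bar x \land T_2'$, I would pick the parent whose literal on $x$ is consistent with $\tau$: $T_1$ when $\tau(x)=1$, $T_2$ when $\tau(x)=0$, and either one when $x$ is outside the domain of $\tau$. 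Every universal literal of $T_1'$ (respectively $T_2'$) survives the resolution step (which removes only $x$ and $\bar x$) and the subsequent $\exists$-reduction (which removes only existential literals), and therefore appears in $T(v)$; because $T(v)$ agrees with $\tau$, each such literal agrees with $\tau$. Together with the chosen literal on $x$, which also agrees with $\tau$ by construction, this shows that the selected parent agrees with $\tau$, and the induction hypothesis delivers the desired leaf.

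The only delicate point is the term-resolution case, and the hard part there is the bookkeeping on literals: one must verify that the universal literals of the chosen parent are not lost in going to $T(v)$. This is exactly the observation that term-resolution resolves only on universal variables, so the resolution rule itself removes only the literals $x$ and $\bar x$, and that $\exists$-reduction touches only existential literals. Once this invariant is explicitly stated, the dichotomy driven by the value of $\tau(x)$ makes the induction go through cleanly, and the lemma follows.
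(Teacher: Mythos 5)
Your proof is correct and takes essentially the same approach as the paper's: the paper constructs a path from the root to a leaf along which every term agrees with $\tau$, which is precisely your strengthened induction invariant read top-down. Both arguments rest on the same two observations---$\exists$-reduction removes only existential literals, so agreement is preserved upward, and in a resolution step on a universal variable at least one parent's pivot literal is consistent with $\tau$---with your write-up merely being more explicit about why the chosen parent's remaining universal literals already occur in the resolvent.
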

\begin{proof}
  We construct a path from the root to some leaf such that each node
  on that path agrees with $\tau$.  The root of $\pi$ agrees
  with $\tau$ because it does not contain any literals. If a term $T$
  agrees with $\tau$ and $T$ is obtained from $T'$ by
  existential-reduction, then $T'$ also agrees with $\tau$ since $\tau
  $ assigns only to universal variables. If $ T $ agrees with $\tau $
  and is obtained from $T_1$ and $T_2$ by term-resolution on some
  variable $y$, it has to be that $y$ is in one of the $ T_1 $, $ T_2 $
  and $\bar y$ in the other. Hence, at least one of the terms agrees
  with $\tau$.  \qed
\end{proof}

\begin{theorem}\label{theorem:k}
  If all models of $\phi $ contain at least $k$ universal literals,
  then any term-resolution proof of $\Phi$ has at least $2^k$
  leafs. (Recall that a formula has a term-resolution proof if and
  only if it is true.)
\end{theorem}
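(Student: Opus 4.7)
The plan is a counting argument built directly on top of \autoref{lemma:accordance}. Let $u$ denote the total number of universal variables of $\Phi$, and let $\pi$ be any term-resolution proof of $\Phi$. Since every leaf of $\pi$ must arise from the model-generation rule, every leaf-term $T$ is a model of $\phi$ and therefore, by hypothesis, contains at least $k$ universal literals.

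The main step is to bound how many complete universal assignments can agree with a single leaf. Fix a leaf $T$ and let $k_T \geq k$ be the number of universal literals in $T$. An assignment $\tau$ to all universal variables agrees with $T$ precisely when $\tau$ sets each of the universal variables appearing in $T$ consistently with $T$; the remaining $u - k_T$ universal variables are free. Hence at most $2^{u-k_T} \leq 2^{u-k}$ complete universal assignments agree with $T$.

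Now apply \autoref{lemma:accordance}: every complete universal assignment $\tau$ (there are $2^u$ of them) agrees with at least one leaf of $\pi$. Summing the bound of the previous paragraph over the leaves of $\pi$, we get
\[
  2^u \;\leq\; \sum_{T \text{ leaf of } \pi} 2^{u-k} \;=\; L \cdot 2^{u-k},
\]
where $L$ is the number of leaves. Dividing by $2^{u-k}$ yields $L \geq 2^k$.

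I do not expect a real obstacle: the lemma already does the hard work of connecting universal assignments to leaves, and the remainder is a counting bound that is essentially independent of $u$. The only small care is to justify that leaves must indeed be outputs of the model-generation rule (they cannot be conclusions of $\exists$-reduction or term-resolution), and to be pedantic about treating a universal-assignment-to-leaf correspondence rather than a leaf-to-assignment one, so that the inequality runs in the correct direction.
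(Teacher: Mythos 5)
Your proof is correct and follows essentially the same route as the paper's: each leaf, being a model with at least $k$ universal literals, agrees with at most $2^{u-k}$ of the $2^u$ complete universal assignments, and \autoref{lemma:accordance} plus a counting/averaging step yields the bound $2^k$ on the number of leaves. The only difference is presentational (you write the averaging as an explicit summation), so there is nothing further to add.
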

\begin{proof}
  Let $V_u$ be the set of universal variables of $\Phi$. Since each
  leaf-term of any term-resolution proof has at least $ k $ universal
  literals, it can agree with at most $2^{|V_u|-k}$
  different complete assignments to the universal
  variables. \autoref{lemma:accordance} gives that for any of the
  $2^{|V_u|}$ total assignments to $V_u$ there must be a corresponding
  leaf-term. Averaging gives that $\pi$ has at least
  $\frac{2^{|V|}}{2^{|V|-k}}=2^k$ leafs.
\qed
\end{proof}

\autoref{theorem:k} gives us a powerful method of constructing
formulas with large term-resolution proofs. It is sufficient to
construct a true QCNF whose models have many universal literals. Let
us construct one such formula. For a given parameter
$N\in\mathbb{N}^+$ construct the following formula with $2N$ variables
and $2N$ clauses.


 \begin{equation}\label{equation:iff}
   \forall x_1,\exists y_1,
   \dots,
   \forall x_N,\exists y_N.\, \bigwedge_{i\in 1..N} (\bar x_i\lor y_i)\land (x_i\lor\bar y_i)
 \end{equation}

 \begin{proposition}
    Any term-resolution proof of~\eqref{equation:iff} is exponential in $N$.
 \end{proposition}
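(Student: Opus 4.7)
The plan is to apply \autoref{theorem:k} directly. It suffices to show that every model of the matrix of~\eqref{equation:iff} contains at least $N$ universal literals; then the theorem immediately yields a lower bound of $2^N$ leafs, which is exponential in the formula's size (the matrix has $2N$ variables and $2N$ clauses).

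First I would fix an arbitrary model $T$ of the matrix and show, for each index $i\in 1..N$, that $T$ contains at least one of the universal literals $x_i$ or $\bar x_i$. The argument is a short case analysis on the two clauses $(\bar x_i\lor y_i)$ and $(x_i\lor\bar y_i)$: since $T$ is a model, $T$ must intersect each of them. If $T$ contained neither $x_i$ nor $\bar x_i$, then the first clause could only be hit by $y_i\in T$ and the second only by $\bar y_i\in T$, which is impossible because a term cannot contain complementary literals. Hence $T\cap\{x_i,\bar x_i\}\neq\emptyset$.

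Since the sets $\{x_i,\bar x_i\}$ are pairwise disjoint for distinct $i$, summing the preceding observation over $i=1,\dots,N$ gives that $T$ contains at least $N$ distinct universal literals. Applying \autoref{theorem:k} with $k=N$ then yields that any term-resolution proof of~\eqref{equation:iff} has at least $2^N$ leafs, and therefore at least $2^N$ nodes, which is exponential in $N$.

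There is no real obstacle here; the only thing to be careful about is not to implicitly assume that models are complete assignments. The case analysis above is phrased purely in terms of which literals occur in $T$, so it goes through for any term $T$ that hits every clause, without any completeness assumption.
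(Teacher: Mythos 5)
Your proof is correct and follows essentially the same route as the paper's: show every model of the matrix must contain one of $x_i,\bar x_i$ for each $i$ (the paper phrases this as each model containing either $\{x_i,y_i\}$ or $\{\bar x_i,\bar y_i\}$) and then invoke \autoref{theorem:k} with $k=N$. The only difference is that the paper also notes the formula is true (take $y_i=x_i$), which guarantees a term-resolution proof exists and the claim is not vacuous; you may wish to add that one line.
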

 \begin{proof}
   Formula~\eqref{equation:iff} is true as each of the existential
   variables $ y_i $ can be set to the same value as the variable
   $x_i$ and thus satisfying the matrix.

   Let $\psi $ denote the matrix of~\eqref{equation:iff}.  Each pair
   of clauses $\bar x_i\lor y_i$ and $\bar y_i\lor x_i$ must be
   satisfied by any model $\tau$ of $\psi$, which can be only done in
   two ways: the model will contain the literals $\{y_i,x_i\}$ or the
   model contained literals $\{\bar y_i,\bar x_i\}$. Hence $\tau$
   contains a literal for each $x_i$ and for each $y_i$.
   \autoref{theorem:k} gives that at least $2^N$ models are needed in
   the leafs of any term-resolution proof.
\qed
 \end{proof}
\section {A Possible Remedy---Negation}
This section suggests a possible remedy to the weakness exposed in the
previous section. Instead of proving a formula true by
term-resolution, we propose to refute its negation by Q-resolution (an
analogous approach to the one of propositional resolution).

To construct a negation of a formula, we follow the standard equalities
$\lnot\forall x.\ \Psi=\exists x.\ \lnot\Psi $ and $\lnot\exists x.\
\Psi=\forall x.\ \lnot\Psi $.  In order to bring the matrix back to
conjunctive normal form, we add additional ({\em Tseitin})
variables~\cite{tseitin68}.  We use the optimization by
Plaisted-Greenbaum, which enables encoding variables' semantics only
in one direction~\cite{DBLP:journals/jsc/PlaistedG86}. In particular,
for each clause we introduce a fresh variable that is forced to true
when that clause becomes true. Using these variables, we construct a
clause that requires that at least one of the clauses is false.

It would be correct to insert these fresh variables at the end of the
prefix (existentially quantified) but we will see that it is useful to
insert them further towards the outer levels, if possible.

\begin{definition}\label{definition:negation}
  The negation of a formula $P.\ \phi$ is denoted as $\Neg(P.\ \phi)$
  and constructed as follows.
  For each clause $ C $ introduce a fresh variable $ n_C $.  Construct
  the prefix of $\Neg(P.\ \phi)$ from $P$ inverting all the
  quantifiers in $P$ and inserting each of the variables the variable
  $n_C$ after the variable with maximal level in $ C $.  Construct a
  matrix of $\Neg(P.\ \phi)$ as the following clauses.

\[
 \comprehension{ \comprehension{\bar l\lor n_C}{l\in C} }{C\in\phi}
\;\union\;
\{ \bigvee_{C\in\phi}\bar n_C \} 
\]
\end{definition}

\begin{example}
 The $\Neg\left(\forall x\exists y\exists z.\ (\bar x\lor y)\land(x\lor z)\right)$
 is equal to 
 $\exists x\forall y\exists c_1\forall z\exists c_2.\
         (x\lor c_1)\land
         (\bar y\lor c_1)\land
         (\bar x\lor c_2)\land
         (\bar z\lor c_2)\land
         (\bar c_1\lor\bar c_2)$.
\end{example}

Clearly, $\Neg(\Psi)$ is false if and only if $\Psi $ is true. We say
that a QCNF $\Psi $ is {\em negation-refuted} by a
Q-resolution proof $\pi $ iff $\pi$ is a refutation of $\Neg(\Psi)$.

\subsection {Negation-Refutation P-simulates Term-Resolution}

The first question we should ask is whether for any term-resolution
proof there is a polynomial-size negation-refutation proof.  We show
this is indeed the case.

\begin{theorem}\label{theorem:simulation}
  Negation-refutation p-simulates term-resolution.
\end{theorem}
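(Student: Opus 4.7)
The plan is to map each term $T$ appearing in a term-resolution proof $\pi$ of $\Psi=\Pref.\ \phi$ to a Q-resolution derivation of a clause $C_T$ from $\Neg(\Psi)$, maintaining the invariant $C_T \subseteq \{\bar l : l\in T\}$. Because the root of $\pi$ is the empty term $\top$, this invariant forces the final $C_T$ to be the empty clause $\bot$, producing the desired refutation. The prefix inversion in \autoref{definition:negation} makes the map natural: the literals on which Q-resolution is allowed to pivot (existentials in $\Neg(\Psi)$) are exactly the universals of $\Psi$ that term-resolution pivots on, and reduction targets on the two sides line up dually.

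For the model-generation rule, suppose $T$ is a leaf of $\pi$ with $T\intersection C\neq\emptyset$ for every $C\in\phi$. I choose a witness $l_C \in T\intersection C$ for each clause $C$ and, starting from the matrix clause $\bigvee_{C\in\phi}\bar n_C$ of $\Neg(\Psi)$, sequentially resolve on the pivots $n_C$ using the matrix clauses $\bar l_C\lor n_C$ (all $n_C$ are existential in $\Neg(\Psi)$). The result lies in $\{\bar l_C : C\in\phi\} \subseteq \{\bar l : l\in T\}$; any $\forall$-reductions implicitly applied by the Q-resolvent rule only remove literals, which preserves the invariant. No tautology arises along the way, because $T$ contains no complementary literals and the $n_C$ are fresh. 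This contributes $O(|\phi|)$ Q-resolution steps per leaf.

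For the inductive cases I proceed as follows. If $T$ is obtained from $T'$ by $\exists$-reducing an unblocked existential $l$, then $\bar l$ is universal in $\Neg(\Psi)$, and the existential literals of $\Neg(\Psi)$ appearing in $C_{T'}$ are precisely the images $\bar u$ of universal $u\in T'$, whose levels agree with those of the universal literals of $T'$. Hence $\bar l$ is also unblocked in $C_{T'}$; either a single $\forall$-reduction step produces $C_T$, or $\bar l$ has already been removed and I just reuse $C_{T'}$ (which then lies in $\{\bar l : l\in T\}$). If $T$ comes from $T_1\ni x$ and $T_2\ni\bar x$ by term-resolution on a universal $x$, then $x$ is existential in $\Neg(\Psi)$ and $C_{T_1}, C_{T_2}$ can be Q-resolved on $x$ whenever both still contain the pivot; if a pivot is missing, the corresponding clause is already a subset of $\{\bar l : l\in T\}$ and can be forwarded unchanged. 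Each such step costs $O(1)$.

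The main obstacle is verifying that the leaf construction is always well-defined and that the implicit $\forall$-reductions built into every Q-resolvent never break the invariant $C_T \subseteq \{\bar l : l\in T\}$; both reduce to the observation that $C_T$ only ever contains negations of original literals from $T$, so the level ordering transferred from $\Psi$ to $\Neg(\Psi)$ is faithful. Summing contributions, the whole translation uses $O(|\pi|\cdot|\phi|)$ Q-resolution steps and can be carried out in polynomial time, establishing the claimed p-simulation.
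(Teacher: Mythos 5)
Your proposal follows essentially the same route as the paper's proof: derive the negation of each leaf-model by resolving the long clause $\bigvee_{C\in\phi}\bar n_C$ against the clauses $\bar l\lor n_C$ for witnesses $l\in C\cap M$, and then mirror each term-resolution step by a Q-resolution step and each $\exists$-reduction by a $\forall$-reduction, ending in the empty clause. Your invariant $C_T\subseteq\{\bar l : l\in T\}$ is a slightly more careful formulation of the paper's exact ``each node is the negation of the corresponding node'' correspondence (it cleanly handles missing pivots and implicit reductions), but the construction and the polynomial size bound are the same.
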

\begin{proof}[sketch]
Let $\pi$ be a term-resolution of a QCNF $P.\ \phi$.  Construct a
Q-resolution of $\Neg(P.\ \phi)$ as follows.
Let $M$ be a leaf of $\pi$.  From the rules of term-resolution for
each $C\in\phi$ there is a literal $l$ s.t.\ $l\in C$ and $l\in
M$. From the definition of $\Neg$, the QCNF $\Neg(P.\ \phi)$ contains
the clause $\bar n_C\lor\bar l$ for such literal.

Starting with the clause $\bigvee_{C\in\phi}\bar n_C$, resolve each
literal $\bar n_C$ with the clause $n_C\lor\bar l$, for each $l$ s.t.\ $l\in C$ and
$l\in M$. This results in the clause $\bigvee_{l\in M}\bar l$.  Note
that this clause does not contain contradictory literals because $M$
must not contain contradictory literals.

Repeating this process for each leaf of $\pi $ produces clauses that
are negations of those leaves. Perform Q-resolutions steps and
$\forall$-reductions as are done term-resolutions steps and
$\exists$-reductions in $\pi$.  This produces a proof where each node
to a negation of the corresponding node in $\pi$.  Since $\pi$ has the
empty term in the root, the produced tree has the empty clause in the
root.
Resolutions needed to produce each of the leaf clauses requires at most
$\min(|\pi|,|\var(\Phi)|)$ steps thus the resulting Q-resolution
is at most of size $(|\Phi|+|\pi|)^2$.
\qed
\end{proof}

\subsection{Separation Between Term-Resolution and Negation-Refutation}\label{section:separation}

The previous section shows that  negation-refutation is at least as
powerful as term-resolution. To show that the negation-refutation
proof system is in fact stronger, we recall
formula~\eqref{equation:iff}, whose term-resolution proofs are
exponential, and show it has a negation-refutation proof of linear
size.

 \begin{figure}[t]
   \centering
\begin{tikzpicture}[baseline=0,grow=up,xscale=3,yscale=.8]
\node(r) at (0,0)  {$D_{N-2}$  } ;
\node(n1) at (1,1) {$D_{N-2}\lor\bar x_N$} ;
\node(n2) at (1,2) {$D_{N-2}\lor\bar x_N\lor\bar y_N$ } ;
\draw[uredge] (n2)--(n1);
\node(n3) at (.75,3) { $D_{N-2}\lor\bar y_N\lor\bar c_{2N}$ } ;
\node(n4) at (1.66,3) {$\bar x_N\lor c_{2N}$};
\draw[redge] (n3)--(n2)--(n4);
\node(n5) at (1,4) {$\bar y_N\lor c_{2N-1}$};
\node(t) at (0,4) { $D_{N-2}\lor\bar c_{2N-1}\lor\bar c_{2N}$ } ;
\draw[redge] (t)--(n3)--(n5);
\node(n6) at (-1,1) {$D_{N-2}\lor x_N$} ;
\node(n7) at (-1,2) {$D_{N-2}\lor x_N\lor y_N$ } ;
\draw[uredge] (n7)--(n6);
\node(n8) at (-.75,3) { $D_{N-2}\lor y_N\lor\bar c_{2N-1}$ } ;
\node(n9) at (-1.66,3) {$x_N\lor c_{2N-1}$};
\draw[redge] (n8)--(n7)--(n9);
\node(n10) at (-1,4) {$y_N\lor c_{2N}$};
\draw[redge] (n10)--(n8)--(t);
\draw[redge] (n6)--(r)--(n1);
%
%
\end{tikzpicture}
   \caption{Resolving away negation of~\eqref{equation:iff} (where 
      $D_{N-2}=\bar c_1\lor\dots\lor\bar c_{2N-3}\lor \bar c_{2N-2}$).
    }
   \label{fig:linear}
 \end{figure}
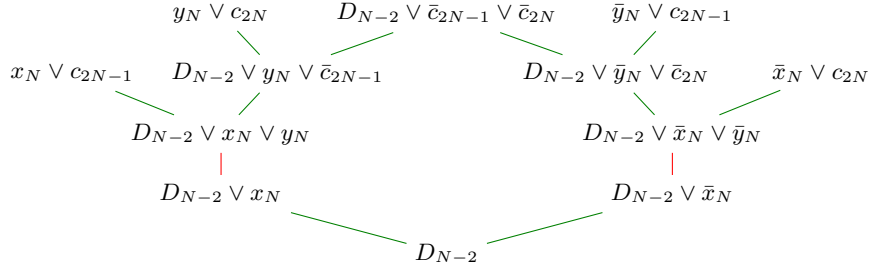

 \begin{proposition}
   Formula \eqref{equation:iff} has a linear 
   negation-refutation proof.
 \end{proposition}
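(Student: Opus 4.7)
My plan is to make the derivation pattern of \autoref{fig:linear} rigorous and then iterate it over all $N$ levels to get a refutation of total size $O(N)$.

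The first step is to spell out $\Neg$ of \eqref{equation:iff}. By \autoref{definition:negation}, the maximum-level variable in both matrix clauses at layer $i$ is $y_i$, so both Tseitin variables $c_{2i-1}$ (associated to $\bar x_i\lor y_i$) and $c_{2i}$ (associated to $x_i\lor\bar y_i$) are inserted immediately after $y_i$ in the inverted prefix, yielding
\[\exists x_1\,\forall y_1\,\exists c_1\,\exists c_2\,\cdots\,\exists x_N\,\forall y_N\,\exists c_{2N-1}\,\exists c_{2N}.\]
The matrix consists of the four Tseitin clauses $x_i\lor c_{2i-1}$, $\bar y_i\lor c_{2i-1}$, $\bar x_i\lor c_{2i}$, $y_i\lor c_{2i}$ for $i\in 1..N$, together with the single ``at-least-one-false'' clause. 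I abbreviate $E_k := \bar c_1\lor\dots\lor\bar c_{2k}$, so $E_N$ is the initial at-least-one-false clause and $E_0$ is empty.

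Second, I verify that the block in \autoref{fig:linear}, instantiated at index $k$, derives $E_{k-1}$ from $E_k$ in a constant number of steps. Concretely, Q-resolve $E_k$ against $\bar y_k\lor c_{2k-1}$ and then against $\bar x_k\lor c_{2k}$ to obtain $E_{k-1}\lor\bar x_k\lor\bar y_k$; then $\forall$-reduce $\bar y_k$ to get $E_{k-1}\lor\bar x_k$. Symmetrically, resolve $E_k$ against $y_k\lor c_{2k}$ and $x_k\lor c_{2k-1}$ to obtain $E_{k-1}\lor x_k\lor y_k$ and $\forall$-reduce $y_k$ to get $E_{k-1}\lor x_k$. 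A final Q-resolution on $x_k$ yields $E_{k-1}$.

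The only non-routine point is checking that the $\forall$-reductions of $\bar y_k$ and $y_k$ are legal, i.e.\ that $y_k$ has no blocking existential in the intermediate clauses. In the inverted prefix the existentials of level greater than $y_k$ are $c_{2k-1}, c_{2k}, x_{k+1}, c_{2k+1}, c_{2k+2}, \dots$, and none of these occur in $E_{k-1}\lor\bar x_k\lor\bar y_k$ or its symmetric counterpart: the $c$-literals present belong to $E_{k-1}$ and are therefore indexed at most $2k-2$, while $x_k$ has level strictly below $y_k$. Hence the reductions apply. Iterating the block from $k=N$ down to $k=1$ turns $E_N$ into $E_0=\bot$; each block contributes a fixed number of steps, yielding an $O(N)$-size refutation. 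The main obstacle is this level bookkeeping for the $\forall$-reductions; the resolution bookkeeping itself is mechanical.
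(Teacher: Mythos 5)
Your proposal is correct and follows essentially the same route as the paper: it is exactly the derivation block of \autoref{fig:linear} (resolve the two innermost $\bar c$-literals against the Tseitin clauses to get $E_{k-1}\lor\bar x_k\lor\bar y_k$ and $E_{k-1}\lor x_k\lor y_k$, $\forall$-reduce $y_k$, resolve on $x_k$), iterated from $k=N$ down to $1$ for a linear total. Your explicit check that the $\forall$-reductions of $y_k$ are unblocked — because the surviving $c$-literals have index at most $2k-2$ and $x_k$ sits below $y_k$ in the inverted prefix — is the one point the paper's sketch leaves implicit, and you handle it correctly.
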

 \begin{proof}[sketch]
   Negation of \eqref{equation:iff} introduces variables
   $c_1,\dots,c_{2N}$ representing the respective clauses.
   In particular, the following clauses are constructed
       $     x_i\lor c_{2i-1}$,
       $\bar y_i\lor c_{2i-1}$,
       $\bar x_i\lor c_{2i}$,
       $     y_i\lor c_{2i}$ for $i\in 1..N$ and the clause 
   $\bar c_1\lor\dots\lor\bar c_{2N}$. With the prefix 
   $\exists x_1\forall y_1\exists c_1c_2\dots
    \exists x_1\forall y_1\exists c_{2N-1}c_{2N}$.

    We show how to resolve away the literals $\bar c_{2N-1}$ and $\bar
    c_{2N}$; the rest of the $c_i$ literals is resolved in the same
    fashion.  For conciseness we define $D_{N-2}$ as $\bar
    c_1\lor\dots\lor\bar c_{2N-3}\lor \bar c_{2N-2}$.
    \autoref{fig:linear} shows how $\bar c_{2N-1}$ and $\bar c_{2N}$
    are replaced by $y_N$ and $x_N$ at which point$ y_N $ is universally reduced. Analogously,
    the literals are replaced with $\bar x_N$ and $\bar y_N$, which enables resolving
    $x_N$ away.

    Using this construction, each of the literals $\bar c_{2i-1},\bar c_{2i}$ are
    resolved away in $7$~resolution/reduction steps thus resulting in a
    resolution proof with $7N$ resolution/reduction steps in total.
\qed
 \end{proof}

\subsection{Variable Definitions}
\label{sec:definitions}

We observe that formula~\eqref{equation:iff} is an example of a formula
where an existential variable $y$ is {\em defined}, i.e.\ the value of
the variable is determined by values of some variables with a lower
level (in the case of formula ~\eqref{equation:iff} the value of $y_i$ is
determined by the value of $x_i$). So the natural question to ask is
whether any definition can be proven true by negation-refutation.
We show that this is indeed the case but we will need {\em
  QU-resolution}---an extension of Q-resolution that enables resolving
on {\em universal variables}~\cite{DBLP:conf/cp/Gelder12}.

We will demonstrate how negations of definitions can be refuted on the
following representative example.  Consider the prefix $\exists
x_1\forall x_2\exists x_3o_1o_2o_3$ and a matrix capturing the
equalities $o_1=\nand(x_1,x_2)$, $o_2=\nand(x_2,x_3)$, and
$o_3=\nand(o_1,o_2)$. These correspond to the following clauses
(Tseitin variables that will be used for negating the clauses are indicated in
parentheses).

\[
\begin{array}{lclcl}
  (c_1)\; \bar x_1\lor \bar x_2 \lor \bar o_1 &\;\;\;&
  (c_4)\; \bar x_2\lor \bar x_3 \lor \bar o_2 &\;\;\;&
  (c_7)\; \bar o_1\lor \bar o_2 \lor \bar o_3 \\
  (c_2)\; x_2\lor o_1&\;\;\;&
  (c_5)\; x_2\lor o_2&\;\;\;&
  (c_8)\; o_1\lor o_3\\
  (c_3)\; x_2\lor o_1&\;\;\;&
  (c_6)\; x_2\lor o_2&\;\;\;&
  (c_9)\; o_1\lor o_3\\
\end{array}
\]

After negating this formula, we obtain the following prefix.

\[\forall x_1\exists x_2\forall x_3
\forall o_1
\exists c_1c_2c_3
\forall o_2
\exists c_4c_5c_6
\forall o_3
\exists c_7c_8c_9\]

We omit the negated formula's matrix for succinctness.  The
Q-resolution proof proceeds in a similar fashion as the one
for~\eqref{equation:iff}.  Starting with the clause $\bar
c_1\lor\dots\lor\bar c_9$, the $\bar c_i$ literals are resolved away,
starting with the innermost ones.

\autoref{fig:definition} shows a fragment of the proof, which resolves
away the literals $\bar c_7,\dots,\bar c_9$ (certain resolution steps
are collapsed).  Using the clauses determining the value of $o_3$, the
proof generates the clauses {$\bar c_1\lor \dots\lor \bar o_1\lor\bar
  o_2$} and {$\bar c_1\lor\dots\lor o_2$}. Resolving these two clauses
removes the variable $o_2$.  Note that $ o_2 $ is universal, which is
why we need QU-resolution. In order to
resolve away $o_1$, the clause {$\bar c_1\lor\dots\lor o_1$} is
generated analogously.  Leaving us with the clause $\bar
c_1\lor\dots\lor\bar c_6$.  Note that it was possible to $\forall$-reduce
$o_3$ throughout the process because it is blocked only by the variables
$c_7,\dots,c_9$.  In contrast, the variables $o_1$ and $o_2$ could
{\em not} be $\forall$-reduced because they are blocked by 
the literals $\bar c_5,\dots,\bar c_6$.
The literals $\bar
c_4,\dots,\bar c_6$ and subsequently $\bar c_1,\dots,\bar c_3$ our
resolved in the same fashion.

\begin{figure}[t]
  \centering
\begin{tikzpicture}[xscale=1.2,yscale=.9]
\node (t) at (3.5,10)  { $\bar c_1\lor\dots\lor\bar c_6\lor\bar c_7\lor\bar c_8\lor\bar c_9$ }  ;
\node (a) at (-.7,10)  { $o_3\lor c_7$ };
\node (b) at (-.7,9.5)  { $\bar o_1\lor c_8$ };
\node (d) at (-.7,9)  { $\bar o_2\lor c_9$ };
\node[below] (e) at (2,9.5)  { $\bar c_1\lor\dots\lor\bar c_6\lor\bar o_1\lor\bar o_2\lor o_3$ };
\draw[redge] (a) to[out=-25,in=-185] (e.north);
\draw[redge] (b) -- (e.north);
\draw[redge] (d) to[out=25,in=185] (e.north);
\draw[redge] (t) -- (e.north);
\node[below] (er) at (2,8.7)  { $\bar c_1\lor\dots\lor\bar c_6\lor \bar o_1\lor\bar o_2$ };
\draw[redge][uredge] (e.south) -- (er) ;
\node (n1) at (7.3,10) { $\bar o_3\lor c_8$ };
\node (n2) at (7.3,9.5)  { $\bar o_3\lor c_8$ };
\node (n3) at (7.3,9)  { $o_2\lor c_7$ };
\node[below] (e1) at (5,9.5)  { $\bar c_1\lor\dots\lor\bar c_6\lor o_2\lor\bar o_3$ };
\draw[redge] (n1) to[out=205,in=5] (e1.north);
\draw[redge] (n2) -- (e1.north);
\draw[redge] (n3) to[out=-205,in=-5] (e1.north);
\draw[redge] (t) -- (e1.north);
\node[below] (e1r) at (5,8.7)  { $\bar c_1\lor\dots\lor\bar c_6\lor o_2$ };
\draw[uredge] (e1.south) -- (e1r) ;
\node (b) at (3.5,7.7)  { $\bar c_1\lor\dots\lor\bar c_6\lor\bar o_1$ }  ;
\draw [redge]  (e1r) -- (b.north) -- (er) ;
\end{tikzpicture}
  \caption{Resolving definitions}
  \label{fig:definition}
\end{figure}
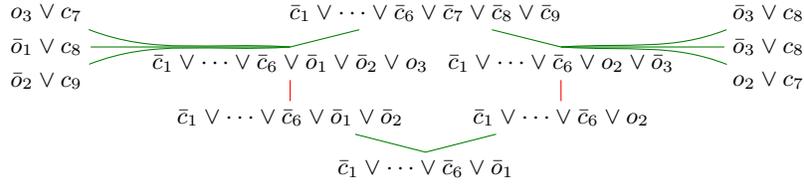

An analogous proof can be carried out for any acyclic circuit of
$\nand$ gates. One picks a topological order of the gates and resolves
them away as in the example above.

\section{ Summary, Conclusions, and Future Work}
\label{sec:conclusion}
 
This paper investigates the strength of term-resolution: a
well-established calculus for true quantified Boolean formulas.  This
paper exposes a significant vulnerability in the term-resolution
calculus, which stems from the fact that the number of leafs of a
term-resolution proof is not bound by the size of the formula in
question. Instead, the model-generation rule enables generating new
leafs of the proof from models of the matrix. The paper demonstrates
that this lets us force the proof to generate exponentially many leafs
by constructing QBF matrices with ``many'' universal literals.

This theoretical observation provides a further underpinning of the
well-known observation that solving quantified Boolean formula with a CNF
matrix can be sometimes particularly
harmful~\cite{DBLP:conf/aaai/AnsoteguiGS05,DBLP:conf/aaai/Zhang06}.
Indeed, we demonstrate that even a very simple formula where each
clause has only two literals leads to exponential term-resolution
proofs.

At the practical level, in response to this issue, Zhang proposes to reason
on a formula and on its negation at the same
time~\cite{DBLP:conf/aaai/Zhang06}. This idea was realized with
different flavors in various
solvers~\cite{DBLP:conf/aaai/GoultiaevaB10,DBLP:conf/sat/KlieberSGC10,GoultiaevaEtAl13}.
The second part of this paper takes a similar avenue at the
theoretical level.  We compare the term-resolution calculus with the
negation-refutation calculus, a calculus which refutes the formula's
negation in order to show the formula true.  The paper demonstrates
that this proof system indeed has favorable theoretical properties, in
particular it p-simulates term-resolution and there is an exponential
separation between the two calculi.

This result is related to the well-known fact that enabling adding new
variables in propositional resolution yields a more powerful proof
system (extended resolution)~\cite{Cook:1976:SPP:1008335.1008338}.
Negation-refutation  introduces new variables  too.  However, in
extended resolution, the prover must come up with the variables'
definitions. In negation-refutation, the definitions are determined by
the clauses of the formula.

The last part of the paper touches upon some limitations of the
negation-refutation calculus. If a variable's value is defined as a
function of some other variables, through a Boolean circuit, we ask if
it's possible to prove that it is always possible to come up with the
right value for the variable being defined, i.e.\ complete the
circuit. This is something that we would hope to be proven easily. We
show that it is indeed possible to prove such definitions true lineary
using negation-refutation but we show so with the use of
QU-resolution---extension of Q-resolution that enables resolving on
universal variables. This result is important from a theoretical
perspective but raises further questions because existing QBF solvers
use Q-resolution. It is the subject
of future work to look for linear proofs for such formulas using only
Q-resolution.



\end{document}